\begin{document}

\mainmatter

\title{A Fault-Tolerant Sequentially Consistent DSM\\With a Compositional Correctness Proof}

\titlerunning{A Fault-Tolerant Sequentially Consistent DSM}  

\author{Niklas Ekstr\"om \and Seif Haridi}

\authorrunning{Niklas Ekstr\"om and Seif Haridi} 

\tocauthor{Niklas Ekstr\"om and Seif Haridi}
\institute{KTH Royal Institute of Technology, Stockholm, Sweden\\
\email{\{neks,haridi\}@kth.se}}

\maketitle

\begin{abstract}
We present the SC-ABD algorithm that implements sequentially consistent distributed shared memory (DSM).
The algorithm tolerates that less than half of the processes are faulty (crash-stop).
Compared to the multi-writer ABD algorithm, SC-ABD requires one instead of two round-trips of communication to perform a write operation, and an equal number of round-trips (two) to perform a read operation.
Although sequential consistency is not a compositional consistency condition, the provided correctness proof is compositional.
\end{abstract}

\section{Introduction}

Using fault-tolerant distributed shared memory (DSM) as a building block in the design of a distributed system can simplify the design, as individual process failures are masked through replication.
To characterize an implementation of distributed shared memory, we consider the following criteria:
{
\renewcommand*{\thefootnote}{$\star$}
\footnotetext{This work was supported by the Swedish Foundation for Strategic Research (SSF).}
}
\begin{itemize}
\item Consistency: a stronger consistency condition may be easier to program against, but may provide worse performance, and vice versa.
\item Multiple writers: an implementation may allow a single process, or multiple processes, to update registers.
\item Latency: the number of round-trips of communication required to execute an operation.
\item Resilience: the number of processes that can be tolerated to be faulty in an execution, $f$, in relation to the total number of processes in the system, $n$.
\end{itemize}


In this paper, we consider the problem of implementing distributed shared memory that is sequentially consistent, allow multiple writers, can complete a write operation after one round of communication and a read operation after two rounds of communication, and that tolerates $f<n/2$ faulty processes.
We present the SC-ABD algorithm as a solution to this problem.
In Table~\ref{comparison} in the conclusion section, we present a comparison of SC-ABD to two other DSM algorithms along the mentioned criteria.

Proving that a distributed shared memory implementation satisfies sequential consistency can be a difficult task.
Unlike some other consistency conditions, sequential consistency is not a \emph{compositional} consistency condition.
Never the less, the proof given for the correctness of SC-ABD is compositional, and we therefore present this proof technique as a contribution in itself.

\section{Model and Definitions}
We consider an asynchronous distributed system composed of $n$ processes, denoted $p_1, \dots, p_n$, and a communication network with reliable links.
We denote by $\Pi = \{1, \dots, n\}$ the set of process identifiers.
In any given system execution, a process is said to be correct if the process never crashes, and otherwise it is said to be faulty.
A process that crashes stops taking steps and can never recover.
We assume that at most $f$ processes are faulty in any given execution, where $f < n/2$.

\subsection{Shared Memory}

A distributed shared memory is a distributed implementation of shared memory.
We consider a shared memory consisting of read/write registers.
Each register holds an integer value, initially zero.
The shared memory defines a set of primitive operations, that provide the only means to manipulate the registers.
In our case, the operations provided are \emph{read} and \emph{write}.
A process invokes an operation and receives a response when the execution of the operation is complete.
We will refer to an \emph{operation execution} as an operation, if the distinction is clear from the context.
Each process is allowed to have at most one outstanding operation, meaning that a process may not invoke another operation before the process has received the response for the previously invoked operation.
Let $o$ refer to a particular operation execution, invoked by process $p_i$.
We denote by $\mathit{inv}(o)$ the \emph{invocation event} that occurs when $p_i$ invokes $o$, and denote by $\mathit{res}(o)$ the \emph{response event} that occurs when the execution of $o$ completes.

We model an execution using a \emph{history}, which is a sequence of invocation and response events, ordered by the real times when the events occurred.
History $H$ is \emph{sequential} if the first event is an invocation event, and every invocation event (except possibly the last) is immediately followed by the matching response event.
By $H|p_i$ we denote the subsequence of $H$ where every event occurs in process $p_i$; we refer to $H|p_i$ as a \emph{process subhistory}.
Similarly, by $H|x$ we denote the subsequence of $H$ containing only events related to operations that target register $x$, and refer to $H|x$ as a \emph{register subhistory}.
A history is \emph{well-formed} if each process subhistory is a sequential history, and in the following we only consider well-formed histories.
Two histories $H$ and $H'$ are \emph{equivalent}, denoted $H \simeq H'$, if and only if, for each process $p_i$, $H|p_i = H'|p_i$.
For events $e_1$ and $e_2$ in history $H$ we write $e_1 <_H e_2$ to denote that $e_1$ precedes $e_2$ in $H$.
We say that ``operation $o$ is in history $H$'' if $\mathit{inv}(o)$ is in $H$.
For operations $o_1$ and $o_2$ in $H$ we write $o_1 <_H o_2$ to denote that $res(o_1) <_H inv(o_2)$.

Operation $o$ is \emph{pending} in history $H$ if the invocation event for $o$ is in $H$ but not the response event.
History $H$ is complete if $H$ does not contain any pending operations.
For presentational simplicity, we consider only complete histories in the rest of this paper.

The shared memory has a \emph{sequential specification}, which is a set containing all sequential histories such that each read operation of some register returns the value written by the last write to that register (the write closest preceding the read in the sequential history), or the default value if no such write exists.
A sequential history is \emph{legal} if it is in the shared memory's sequential specification.

Sequential consistency is a consistency condition that was described by Lamport~\cite{Lamport:1979:MMC:1311099.1311750}.
We define what it means for a history to be sequentially consistent:

\begin{definition} \label{sequentially_consistent}
History $H$ is \emph{sequentially consistent}, denoted $\textsf{SC}(H)$, if and only if there exists a legal sequential history $S$ such that $S \simeq H$.
\end{definition}

The correctness conditions that we require of an algorithm implementing sequentially consistent distributed shared memory are:
\begin{itemize}
\item \textbf{Termination:} If a correct process invokes an operation, then the operation eventually completes.
\item \textbf{Sequential Consistency:} Each history corresponding to an execution of the algorithm must be sequentially consistent.
\end{itemize}

\subsection{Causality and Logical Clocks}

Causality and logical clocks were described in a paper by Lamport~\cite{Lamport:1978:TCO:359545.359563}.
Event $e_1$ is said to \emph{causally precede} event $e_2$, denoted $e_1 \rightarrow e_2$, if at least one of the following conditions hold:
(1) $e_1$ and $e_2$ both occur in the same process and $e_1$ occurs before $e_2$, (2) $e_1$ is the sending of message $m$ and $e_2$ is the receipt of $m$, (3) there exists an event $e'$ such that $e_1 \rightarrow e'$ and $e' \rightarrow e_2$.

A logical clock is a device that assigns integers to events in a manner consistent with the causally precedes relation.
More precisely, by letting $lt(e)$ denote the logical time assigned to event $e$, we require that: $e_1 \rightarrow e_2 \Rightarrow  lt(e_1) < lt(e_2)$.

\section{Algorithm}

In this section we present the SC-ABD algorithm, whose pseudo-code is contained in Algorithm~\ref{scabd}.
The algorithm is given as a set of reactive \emph{handlers}.
Each handler has an associated condition that describes when that handler is eligible for execution, e.g., when an operation is invoked, or a message is received.

For each process, the algorithm contains a variable $lt$ that implements a logical clock.
Whenever a handler is executed in response to a local condition (i.e., an operation is invoked) the logical clock is incremented by one.
When a message is sent from process $p_i$ to process $p_j$, the current logical time of $p_i$ is included in the message, and when the message is received by $p_j$ and the corresponding handler is executed, $p_j$'s logical clock is updated to a logical time that is one greater than the maximum of $p_j$'s previous logical time and the logical time included in the message.

Each process stores the values that have been written to the registers.
In order to determine which value is more recent, a timestamp is associated with each value.
A value and its associated timestamp are stored together as a \emph{timestamp-value pair}.
The algorithm has a local variable, $\mathit{tvps}$, that maps register identifiers to timestamp-value pairs.

Communication in the algorithm proceeds in \emph{phases}.
A phase consists of a round of communication, where the process executing the phase, $p_i$, sends a request to all processes and waits for responses from a majority of the processes before the phase ends.

A write operation has one phase: the update phase.
The process executing the write operation, $p_i$, creates a timestamp as the pair with $p_i$'s current logical time and $p_i$'s process identifier, $i$.
It then pairs this timestamp together with the value to be written into a timestamp-value pair.
$p_i$ sends an update request containing the register identifier and the timestamp-value pair to all processes (lines 16-20 in Algorithm~\ref{scabd}).
When process $p_j$ receives the update request it updates its $\mathit{tvps}$ with the supplied timestamp-value pair if the timestamp is greater than the timestamp of the timestamp-value pair that was previously stored, and then sends an ack response (lines 21-23).
After $p_i$ receives acks from a majority of processes, $p_i$ returns OK (lines 24-30).

A read operation has two phases: the query phase and the update phase.
The process executing the read operation, $p_i$, sends a query request to all processes containing the register identifier  for the register that is being read (lines 1-5).
When process $p_j$ receives the query request, $p_j$ retrieves the timestamp-value pair stored in $tvps$ for the register identifier, and sends this timestamp-value pair in a response message to $p_i$.
This timestamp-value pair is the maximal timestamp-value pair that $p_j$ has received so far in an update request, or the initial timestamp-value pair, $((0,0), 0)$, if no update request had been received previously (lines 6-7).
When $p_i$ has received response messages from a majority of processes, $p_i$ chooses the timestamp-value pair, $(ts,v)$, with the maximum timestamp out of the timestamp-value pairs received.
Before returning value $v$, $p_i$ performs an update phase using the $(ts, v)$ timestamp-value pair, in order to guarantee that a majority of the processes have stored the timestamp-value pair before the read completes (lines 8-15 and 21-30).

\begin{algorithm}
	\caption{\textsc{SC-ABD} -- code for $p_i$.}\label{scabd}
	
	\begin{algorithmic}[1]
		\Statex \textbf{Local variables:}
		\Statex $\mathit{lt}$ -- logical time; initially 0
		\Statex $\mathit{rid}$ -- current request identifier; initially 0
		\Statex $\mathit{tvps}$ -- map from register ids to timestamp-value pairs; initially maps to $((0, 0), 0)$
		\Statex $\mathit{responses}$ -- tracking responses/acks; initially $\{\}$
		\Statex $\mathit{reading}$ -- indicating whether currently reading ($\mathit{true}$) or writing ($\mathit{false}$)
		\Statex $\mathit{rreg}, \mathit{rval}$ -- temporary storage for register identifier and return value during reads
		\Statex
		\Statex \textbf{Note:} bcast $\langle m\rangle$ is an abbreviation for: \textbf{for} $j \in \Pi$ \textbf{do} send $\langle m\rangle$ to $p_j$
	\end{algorithmic}
	
	\begin{multicols}{2}
		\begin{algorithmic}[1]
			\Statex \textbf{When READ($\mathit{r}$) is invoked:}
			\State $\mathit{lt} \gets \mathit{lt} + 1$
			\State $\mathit{reading} \gets \mathit{true}$
			\State $\mathit{rreg} \gets \mathit{r}$
			\State $\mathit{rid} \gets rid + 1$
			\State bcast $\langle\mbox{``query''}, \mathit{lt}, \mathit{rid}, \mathit{r}\rangle$
			\Statex
			\Statex \textbf{When $\langle\mbox{``query''}, \mathit{lt}^\prime, \mathit{rid}^\prime, \mathit{r}\rangle$ is}
			\Statex \hspace{3mm}\textbf{received from $p_j$:}
			\State $\mathit{lt} \gets \max(\mathit{lt}, \mathit{lt}^\prime) + 1$
			\State send $\langle\mbox{``response''}, \mathit{lt}, \mathit{rid}^\prime, \mathit{tvps}[r]\rangle$ to $p_j$
			\Statex
			\Statex \textbf{When $\langle\mbox{``response''}, \mathit{lt}^\prime, \mathit{rid}^\prime, \mathit{tsv}^\prime\rangle$ is}
			\Statex \hspace{3mm}\textbf{received from $p_j$ with $\mathit{rid}=\mathit{rid}^\prime$:}
			\State $\mathit{lt} \gets \max(\mathit{lt}, \mathit{lt}^\prime) + 1$
			\State $\mathit{responses} \gets \mathit{responses} \cup \{ (\mathit{tsv}^\prime, j)\}$
			\If{$|\mathit{responses}| = \lfloor |\Pi| / 2 \rfloor + 1$}
			\State $(\mathit{tsv}, \_) \gets \max(\mathit{responses})$
			\State $(\mathit{ts}, \mathit{rval}) \gets \mathit{tsv}$
			\State $\mathit{responses} \gets \{\}$
			\State $\mathit{rid} \gets rid + 1$
			\State bcast $\langle\mbox{``update''}, \mathit{lt}, \mathit{rid}, \mathit{rreg}, \mathit{tsv}\rangle$
			\EndIf
			\Statex \textbf{When WRITE($\mathit{r}, \mathit{v}$) is invoked:}
			\State $\mathit{lt} \gets \mathit{lt} + 1$
			\State $\mathit{reading} \gets \mathit{false}$
			\State $\mathit{tsv} \gets ((lt, i), v)$
			\State $\mathit{rid} \gets rid + 1$
			\State bcast $\langle\mbox{``update''}, \mathit{lt}, \mathit{rid}, \mathit{r}, \mathit{tsv}\rangle$
			\Statex
			\Statex \textbf{When $\langle\mbox{``update''}, \mathit{lt}^\prime, \mathit{rid}^\prime, \mathit{r}, \mathit{tsv}^\prime\rangle$ is}
			\Statex \hspace{3mm}\textbf{received from $p_j$:}
			\State $\mathit{lt} \gets \max(\mathit{lt}, \mathit{lt}^\prime) + 1$
			\State $\mathit{tvps}[r] \gets \max(\mathit{tvps}[r], \mathit{tsv}^\prime)$
			\State send $\langle\mbox{``ack''}, \mathit{lt}, \mathit{rid}^\prime\rangle$ to $p_j$
			\Statex
			\Statex \textbf{When $\langle\mbox{``ack''}, \mathit{lt}^\prime, \mathit{rid}^\prime\rangle$ is}
			\Statex \hspace{3mm}\textbf{received from $p_j$ with $\mathit{rid}=\mathit{rid}^\prime$:}
			\State $\mathit{lt} \gets \max(\mathit{lt}, \mathit{lt}^\prime) + 1$
			\State $\mathit{responses} \gets \mathit{responses} \cup \{j\}$
			\If{$|\mathit{responses}| = \lfloor |\Pi| / 2 \rfloor + 1$}
			\State $\mathit{responses} \gets \{\}$
			\State $\mathit{rid} \gets rid + 1$
			\State \textbf{if} $\mathit{reading}$ \textbf{then} RETURN $\mathit{rval}$
			\State \textbf{else} RETURN OK
			\EndIf
		\end{algorithmic}
	\end{multicols}
\end{algorithm}

\section{Correctness Proof} \label{sec_correctness}

We first prove that SC-ABD satisfies the termination property.

\begin{lemma} \label{lemma_termination}
Algorithm SC-ABD satisfies the termination property.
\end{lemma}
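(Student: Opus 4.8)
The plan is to argue that every operation invoked by a correct process completes by showing that each of its phases terminates. Recall that a write operation consists of a single update phase, and a read operation consists of a query phase followed by an update phase; in each case a phase ends once the initiating process has collected messages (responses or acks) from a majority of processes carrying the matching request identifier. So it suffices to prove that any phase started by a correct process eventually ends.

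First I would observe that when a correct process $p_i$ starts a phase, it increments $\mathit{rid}$ and broadcasts a request (``query'' or ``update'') tagged with that fresh value of $\mathit{rid}$ to all $n$ processes, via reliable links. Every correct process $p_j$ — of which there are at least $n - f > n/2$, since $f < n/2$ — eventually receives this request and, according to the handler for ``query'' (lines 6--7) or ``update'' (lines 21--23), unconditionally sends back the matching response (``response'' or ``ack'') carrying the same $\mathit{rid}^\prime$. By reliability of the links these replies eventually reach $p_i$. The next step is to check that $p_i$ actually processes these replies: the response/ack handlers at $p_i$ are guarded by the condition $\mathit{rid} = \mathit{rid}^\prime$, so I must confirm that $p_i$'s variable $\mathit{rid}$ still equals the value it broadcast. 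This holds because $p_i$ changes $\mathit{rid}$ only when it starts a new phase, and — by the at-most-one-outstanding-operation assumption together with the structure of the handlers — $p_i$ does not start another phase until the current one ends (for a read, the query phase's completion is exactly what triggers the update phase, and only the completion of that update phase triggers RETURN). Hence all the majority of replies from correct processes are accepted and accumulated in $\mathit{responses}$, the guard $|\mathit{responses}| = \lfloor |\Pi|/2 \rfloor + 1$ eventually becomes true, and the phase ends.

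Chaining this: a correct process invoking a write reaches line 24's check and returns OK; a correct process invoking a read completes its query phase (lines 8--15), which broadcasts the update request, and then completes that update phase and executes RETURN $\mathit{rval}$. Therefore every operation invoked by a correct process eventually completes, which is the termination property.

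I expect the only delicate point to be the bookkeeping around $\mathit{rid}$ and the interleaving of handlers — specifically, arguing that stale replies from an earlier phase cannot have bumped $\mathit{rid}$ past the current value, and that no concurrent operation at $p_i$ interferes — rather than the liveness core, which is the standard majority/reliable-link argument enabled by $f < n/2$.
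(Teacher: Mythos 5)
Your argument is correct and follows essentially the same route as the paper's proof: reliable links plus a correct majority guarantee that each communication phase completes, hence each operation completes. The paper states this in one sentence, while you additionally verify the $\mathit{rid}$ bookkeeping and handler guards explicitly; this extra detail is sound and strengthens the argument but does not change the approach.
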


\begin{proof}
As links are reliable and a majority of processes are correct according to the assumptions in our model, each communication phase executed by a correct process is guaranteed to eventually complete, and every operation executed by a correct process is therefore guaranteed to complete.\qed
\end{proof}

In the rest of this section we prove that the algorithm satisfies sequential consistency.

\subsection{Linearizability}

Linearizability is a consistency condition described by Herlihy and Wing~\cite{Herlihy:1990:LCC:78969.78972}.

\begin{definition} \label{linearizable}
History $H$ is \emph{linearizable}, denoted $\textsf{LIN}(H)$, iff there exists a legal sequential history $S$ such that $S \simeq H$, and $\forall o_1,o_2 \in H: o_1 <_H o_2 \Rightarrow o_1 <_S o_2$.
\end{definition}

Linearizability is \emph{compositional}, in the sense that history $H$ is linearizable if and only if each register subhistory $H|x$ is linearizable:
\begin{equation} \label{lin_compositionality}
\textsf{LIN}(H) \Leftrightarrow \forall x: \textsf{LIN}(H|x)
\end{equation}

From the definition of sequential consistency and the definition of linearizability, it follows that linearizability is stronger than sequential consistency:

\begin{equation} \label{lin_stronger_sc}
\textsf{LIN}(H) \Rightarrow \textsf{SC}(H)
\end{equation}

\subsection{Logical-Time History}

We define the \emph{logical-time history} corresponding to history $H$, denoted $H^{lt}$, to be the sequence containing the same events as $H$, but reordered according to the logical times when the events occurred, using the process identifiers of the processes where the events occurred to break ties.
%


For each process $p_i$, the relative ordering of events in $H|p_i$ is preserved in $H^{lt}|p_i$, as the logical times of events in $H|p_i$ are monotonically increasing.
It follows that the (real-time) history $H$ and its corresponding logical-time history $H^{lt}$ are equivalent, $H \simeq H^{lt}$.
Together with the definition of sequentially consistent histories it follows that:
\begin{equation} \label{sc_equivalence}
\textsf{SC}(H) \Leftrightarrow \textsf{SC}(H^{lt})
\end{equation}

\subsection{Compositional Reasoning}

Combining (\ref{lin_compositionality}), (\ref{lin_stronger_sc}), and (\ref{sc_equivalence}), we have:
\begin{equation} \label{eq_ltlin_compositionality}
\left( \forall x:\textsf{LIN}(H^{lt}|x) \right) \Rightarrow \textsf{LIN}(H^{lt}) \Rightarrow \textsf{SC}(H^{lt}) \Rightarrow \textsf{SC}(H)
\end{equation}
Equation (\ref{eq_ltlin_compositionality}) allows us to reason compositionally, i.e., to reason about, for each register $x$, the register subhistory $H^{lt}|x$ in isolation.

\subsection{Reasoning about the Algorithm}

We state a couple of definitions regarding the algorithm:

\begin{itemize}
\item The logical time of a handler execution is the value assigned to the $lt$ variable on the handler's first line in the algorithm text.

\item The timestamp of operation $o$, denoted $ts(o)$, is the timestamp used in the operation's update phase.
\end{itemize}
\vspace{5mm}
From the definition of logical-time history $H^{lt}$, it follows that:
\begin{equation} \label{lt_assump}
o_1 <_{H^{lt}} o_2 \Rightarrow lt(res(o_1)) \leq lt(inv(o_2))
\end{equation}

We state and prove the following proposition:
\begin{proposition} \label{ts_leq}
Let $o_1$ and $o_2$ be operations in $H^{lt}|x$ such that $o_1$ contains an update phase and $o_2$ contains a query phase.
If $o_1 <_{H^{lt}|x} o_2$ then $ts(o_1) \leq ts(o_2)$. 
\end{proposition}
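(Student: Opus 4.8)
The plan is to exploit quorum intersection together with the logical-clock discipline. Let $p_i$ be the process that executes $o_1$ and $p_k$ the process that executes $o_2$; since both operations target $x$ and $o_1 <_{H^{lt}|x} o_2$, we also have $o_1 <_{H^{lt}} o_2$, hence by (\ref{lt_assump}) $lt(res(o_1)) \leq lt(inv(o_2))$. The update phase of $o_1$ completes only after $p_i$ has collected acks from a majority $Q_1$ of processes, each of which has executed the ``update'' handler carrying $o_1$'s timestamp-value pair (whose timestamp is $ts(o_1)$) and thereby set $\mathit{tvps}[x] \gets \max(\mathit{tvps}[x], \cdot)$. The query phase of $o_2$ yields the timestamp $ts(o_2)$ as the maximum timestamp over the ``response'' messages that $p_k$ collects from a majority $Q_2$ of processes. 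Because $|Q_1| + |Q_2| > n$, there is a process $p_m \in Q_1 \cap Q_2$.

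The core of the argument is to show that at $p_m$ the ``update'' handler execution triggered by $o_1$, call it $e_1$, precedes in real time at $p_m$ the ``query'' handler execution triggered by $o_2$ (the one that emits $p_m$'s response to $o_2$), call it $e_2$. I would establish the chain $lt(e_1) < lt(res(o_1)) \leq lt(inv(o_2)) < lt(e_2)$: the first inequality because $e_1$ causally precedes the response event of $o_1$ (the ack $p_m$ sends is received by $p_i$ and counted toward the majority that makes $p_i$ return); the middle inequality is the instance of (\ref{lt_assump}) noted above; and the last inequality because $inv(o_2)$ causally precedes the ``query'' broadcast of $o_2$, whose receipt at $p_m$ is $e_2$. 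Since every handler execution at $p_m$ strictly increases $p_m$'s logical clock, logical times of $p_m$'s handler executions are strictly increasing in real time, so $lt(e_1) < lt(e_2)$ forces $e_1$ to occur before $e_2$ at $p_m$.

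It then remains to observe that the timestamp component of $\mathit{tvps}[x]$ at $p_m$ is monotone non-decreasing, since it is only ever updated via $\max$. Execution $e_1$ makes that timestamp at least $ts(o_1)$, and because $e_1$ precedes $e_2$, the timestamp-value pair that $p_m$ sends in its ``response'' for $o_2$ still has timestamp at least $ts(o_1)$. As $ts(o_2)$ is the maximum of the timestamps over the responses $p_k$ collects, and $p_m \in Q_2$ contributes one of them, we conclude $ts(o_1) \leq ts(o_2)$.

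I expect the main obstacle to be the bookkeeping in the logical-time chain: arguing rigorously that the ``update'' handler at each member of $Q_1$ causally precedes $res(o_1)$ (this passes through the ack messages and the majority-counting in the ``ack'' handler), that the relevant ``query'' handler at $p_m$ is causally preceded by $inv(o_2)$, and then stitching these causal facts together with (\ref{lt_assump}) and the strict monotonicity of the logical clock to obtain a genuine real-time ordering of $e_1$ and $e_2$ at $p_m$. The quorum intersection and the monotonicity of $\mathit{tvps}[x]$ are routine by comparison.
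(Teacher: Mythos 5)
Your proposal is correct and follows essentially the same route as the paper's proof: quorum intersection to obtain a common process, the causal chain $lt(e_1) < lt(res(o_1)) \leq lt(inv(o_2)) < lt(e_2)$ combined with (\ref{lt_assump}) to order the update and query handler executions at that process, and then monotonicity of the stored timestamp to conclude $ts(o_1) \leq ts(o_2)$. The only differences are notational ($Q_1, Q_2, p_m$ versus the paper's $M_u, M_q, p_k$).
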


\begin{proof}
Let $p_i$ be the process that executes the update phase in $o_1$, and $p_j$ be the process that executes the query phase in $o_2$.
At the time when $p_i$'s update phase completes, $p_i$ will have received response messages from a majority of processes.
Let $M_u$ refer to this majority set of processes.
Similarly, let $M_q$ refer to the majority set of processes from which $p_j$ received responses before the query phase in operation $o_2$ completed.
As any two majority sets intersect, there must be one process, $p_k$, that is both in $M_u$ and in $M_q$.

Let $e_1$ be the event when $p_k$ processes $o_1$'s update request, and $e_2$ the event when $p_k$ processes $o_2$'s query request.
By causality we have $lt(e_1) < lt(res(o_1))$ and $lt(inv(o_2)) < lt(e_2)$, and together with (\ref{lt_assump}) we get $lt(e_1) < lt(e_2)$.
Since $e_1$ and $e_2$ are in the same process, this implies that $e_1$ occurs before $e_2$.

Since $p_k$ returns the timestamp-value pair with the maximal timestamp that it has received in all previous update requests, the timestamp in the response to $o_2$'s query request is guaranteed to be greater than or equal to the timestamp in $o_1$'s update request.
As $p_j$ picks the timestamp-value pair with the maximal timestamp on line 11 of the algorithm, and uses it in its update phase, it follows that $ts(o_1) \leq ts(o_2)$. \qed
\end{proof}

\begin{lemma} \label{lemma_sc}
Algorithm SC-ABD satisfies the sequential consistency property.
\end{lemma}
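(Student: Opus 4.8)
The plan is to invoke the compositional reduction (\ref{eq_ltlin_compositionality}): it suffices to prove $\textsf{LIN}(H^{lt}|x)$ for every register $x$, i.e., to exhibit a legal sequential history equivalent to $H^{lt}|x$ that preserves $<_{H^{lt}|x}$. So I would fix $x$ and work inside $H^{lt}|x$, noting that $o_1 <_{H^{lt}|x} o_2$ implies $o_1 <_{H^{lt}} o_2$ (both endpoints are $x$-events, so their relative order is the same in $H^{lt}|x$ as in $H^{lt}$), hence (\ref{lt_assump}) still applies. Before building the witness I would record two structural facts about the algorithm. First, write timestamps are pairwise distinct and totally ordered: a write by $p_i$ uses $(lt,i)$ with $lt = lt(inv(o))$, and since logical times strictly increase along each process's handler executions, two writes by one process get different first components, while writes by distinct processes differ in the second. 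Second, the initial pair and every pair whose timestamp is $(0,0)$ carries value $0$, and any pair whose timestamp differs from $(0,0)$ was minted by the unique write $w$ holding that timestamp and carries exactly the value $w$ wrote — because the only place a timestamp is created is line~18, and every handler that stores or forwards a pair keeps the (timestamp,\,value) pair intact.

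The key step — and the one I expect to be the main obstacle — is to strengthen Proposition~\ref{ts_leq} to the case where $o_2$ is a write, which has no query phase: if $o_1 <_{H^{lt}|x} o_2$ and $o_2$ is a write, then $ts(o_1) < ts(o_2)$. The first component of $ts(o_2)$ is $lt(inv(o_2))$. If $o_1$ is a write, then $ts(o_1)$ has first component $lt(inv(o_1))$, and causality $inv(o_1) \rightarrow res(o_1)$ together with (\ref{lt_assump}) gives $lt(inv(o_1)) < lt(res(o_1)) \le lt(inv(o_2))$, so $ts(o_1) < ts(o_2)$. If $o_1$ is a read, then either $ts(o_1) = (0,0) < ts(o_2)$ immediately (the first component of $ts(o_2)$ is $\ge 1$), or $ts(o_1) = ts(w')$ for the write $w'$ that minted it; by the provenance fact $o_1$ received the pair for $ts(w')$ in a query response, which rests on a causal chain from $inv(w')$ through the responding process (possibly via intermediate read update phases) to an event of $o_1$'s process that precedes $res(o_1)$, so $lt(inv(w')) < lt(res(o_1)) \le lt(inv(o_2))$ and again $ts(o_1) < ts(o_2)$. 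In particular, no read can precede, in $<_{H^{lt}|x}$, a write carrying the same timestamp.

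With this in hand I would construct the witness $S_x$: list the operations of $H^{lt}|x$ in increasing timestamp order; among operations sharing a timestamp, put the write (if any) first, then the reads ordered by the order of their invocation events in $H^{lt}|x$. By construction $S_x$ is sequential and has exactly the operations of $H^{lt}|x$. Legality: a read $r$ with $ts(r) = (0,0)$ has no write before it in $S_x$ (all writes have larger timestamps) and returns $0$ by the provenance fact; a read $r$ with $ts(r) = ts(w) \neq (0,0)$ returns the value $w$ wrote, and $w$ is the last write before $r$ in $S_x$, since writes before $r$ have timestamp $\le ts(r)$, those with a strictly smaller one precede $w$, and the only one with timestamp $ts(r)$ is $w$. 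Order preservation: if $o_1 <_{H^{lt}|x} o_2$ then $ts(o_1) \le ts(o_2)$ — by Proposition~\ref{ts_leq} when $o_2$ is a read, and strictly, by the strengthened step, when $o_2$ is a write; if the timestamps differ, $o_1$ precedes $o_2$ in $S_x$ by the primary order; if they are equal, $o_2$ is a read, and $o_1$ precedes $o_2$ in $S_x$ either because $o_1$ is a write (writes come first for a given timestamp) or because $o_1$ is a read whose invocation precedes that of $o_2$ in $H^{lt}|x$. Hence $\textsf{LIN}(H^{lt}|x)$ for every $x$, and $\textsf{SC}(H)$ follows from (\ref{eq_ltlin_compositionality}), so SC-ABD satisfies sequential consistency. \qed
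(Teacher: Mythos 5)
Your proposal is correct and follows essentially the same route as the paper: the same reduction via (\ref{eq_ltlin_compositionality}), the same witness (writes by timestamp, each read placed after the write with matching timestamp, read ties broken by invocation logical time), and the same case analysis --- your ``strengthened Proposition~\ref{ts_leq} for a write $o_2$'' is just the paper's write--write and read--write cases packaged as a lemma. You are somewhat more careful than the paper in two spots (the provenance of timestamp-value pairs underlying legality, and the $ts(o_1)=(0,0)$ case where no minting write exists), which the paper's proof glosses over.
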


\begin{proof}
By using equation (\ref{eq_ltlin_compositionality}), we prove that the algorithm satisfies sequential consistency, by showing, for each execution, and for each register $x$, that $\textsf{LIN}(H^{lt}|x)$ holds.
From the definition of linearizability, we see that in order to prove that $\textsf{LIN}(H^{lt}|x)$ holds we are required to show that there exists a legal sequential history $S$ such that $S \simeq H^{lt}|x$, and, for all operations $o_1$ and $o_2$ in $H^{lt}|x$, if $o_1$ precedes $o_2$ in $H^{lt}|x$ then $o_1$ also precedes $o_2$ in $S$.
We proceed by creating a total order on the operations in $H^{lt}|x$ as follows:
\begin{enumerate}
\item Order write operations according to their timestamps. Any two write operations have unique timestamps by construction, so this is a total order.
\item Then order each read operation immediately after the write operation that wrote the value that the read operation returned.
If there are more than one read operations with the same timestamp then they are internally ordered based on the logical times when they were invoked (breaking ties using process identifiers).
\end{enumerate}
Let $S$ be the sequential history obtained from this total order.
As each read operation in $S$ returns the value written by the closest preceding write operation, it follows that $S$ is legal.

\vspace{10mm}
We show that $o_1 <_{H^{lt}|x} o_2 \Rightarrow o_1 <_S o_2$ using the following case analysis:

\begin{itemize}
\item $o_1$ is a write, $o_2$ is a write:
By causality we have $lt(inv(o_1)) < lt(res(o_1))$, which together with (\ref{lt_assump}) gives us $lt(inv(o_1)) < lt(inv(o_2))$.
Because of how the algorithm constructs timestamps (line 18), this implies that $ts(o_1) < ts(o_2)$, from which $o_1 <_S o_2$ follows.

\item $o_1$ is a read, $o_2$ is a write:
There exists a write $w_0$ such that $ts(w_0) = ts(o_1)$.
Since the invocation event of $w_0$ causally precedes the response event of $o_1$, we have $lt(inv(w_0)) < lt(res(o_1))$, and, using (\ref{lt_assump}), we have $lt(inv(w_0)) < lt(inv(o_2))$.
From the analysis of the previous case we have $ts(o_1) = ts(w_0) < ts(o_2)$, from which $o_1 <_S o_2$ follows.

\item $o_1$ is a write, $o_2$ is a read:
By the assumption and Proposition~\ref{ts_leq} it follows that $ts(o_1) \leq ts(o_2)$, from which $o_1 <_S o_2$ immediately follows.

\item $o_1$ is a read, $o_2$ is a read:
Again, by the assumption and Proposition~\ref{ts_leq} it follows that $ts(o_1) \leq ts(o_2)$.
If $ts(o_1) < ts(o_2)$ we directly have $o_1 <_S o_2$.
Otherwise, we have $ts(o_1) = ts(o_2)$. By causality and (\ref{lt_assump}) we have $lt(inv(o_1)) < lt(inv(o_2))$, and $o_1 <_S o_2$ follows from the definition of $S$.
\end{itemize}

Finally we must show that $S \simeq H^{lt}|x$.
For any process $p_i$, consider the history $(H^{lt}|x)|p_i$, which is sequential.
For any pair of operations $o_1$ and $o_2$ in $(H^{lt}|x)|p_i$, either $o_1 <_{(H^{lt}|x)|p_i} o_2$ or $o_2 <_{(H^{lt}|x)|p_i} o_1$.
The same ordering will be preserved in $S|p_i$, according to the case analysis above.
As $S$ and $H^{lt}|x$ contain the same events, we have $S \simeq H^{lt}|x$.\qed
\end{proof}

\begin{theorem}
Algorithm SC-ABD is a correct implementation of sequentially consistent distributed shared memory.
\end{theorem}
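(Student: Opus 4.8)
The plan is to observe that this final theorem is simply the conjunction of the two correctness conditions that were stipulated in Section~2, namely \textbf{Termination} and \textbf{Sequential Consistency}, and that each of these has already been established as a separate lemma. Concretely, I would invoke Lemma~\ref{lemma_termination} to discharge the termination requirement (every operation invoked by a correct process eventually completes, since links are reliable and a majority of processes are correct, so every communication phase terminates), and invoke Lemma~\ref{lemma_sc} to discharge the sequential-consistency requirement (every history arising from an execution of SC-ABD is sequentially consistent).

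The only thing that needs to be said beyond citing the two lemmas is a sentence recalling that ``a correct implementation of sequentially consistent distributed shared memory'' was defined to mean precisely an algorithm satisfying those two properties; hence the two lemmas together are not merely sufficient but exactly the required obligations. So the proof is a two-line assembly step: state that Lemma~\ref{lemma_termination} gives termination, state that Lemma~\ref{lemma_sc} gives sequential consistency, and conclude.

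There is no real obstacle here — all the technical work was done in establishing Proposition~\ref{ts_leq} and Lemma~\ref{lemma_sc}, in particular the compositional detour through logical-time histories and linearizability captured by~(\ref{eq_ltlin_compositionality}). If I wanted to add value in this final proof I might briefly remind the reader why the compositional argument is sound despite sequential consistency not being compositional — i.e., that compositionality is applied to linearizability of the register subhistories $H^{lt}|x$, not to sequential consistency directly, and the passage from $H$ to $H^{lt}$ is justified by $H \simeq H^{lt}$ together with~(\ref{sc_equivalence}) — but strictly speaking that recap is optional. I would then close with \qed.
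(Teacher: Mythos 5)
Your proposal matches the paper's proof exactly: the theorem is discharged by citing Lemma~\ref{lemma_termination} for termination and Lemma~\ref{lemma_sc} for sequential consistency, which together constitute precisely the stated correctness conditions. The paper's version is the same two-line assembly step, without the optional recap of the compositional detour.
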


\begin{proof}
Follows directly from Lemma~\ref{lemma_termination} and Lemma~\ref{lemma_sc}.\qed
\end{proof}

\section{Related Work}

Research about shared memory has a long history in distributed computing.

\subsection{Consistency Conditions}

Lamport described sequential consistency~\cite{Lamport:1978:TCO:359545.359563}.
In multiprocessor systems, sequential consistency is widely regarded as the ``gold standard'',
but most multiprocessor systems provide weaker consistency by default, and require that programs use memory fences to achieve sequentially consistent behavior.

Proving that a shared memory implementation satisfies sequential consistency is a well-researched problem.
Alur, McMillan, and Peled proved that, in general, the sequential consistency verification problem is undecidable~\cite{Alur:1996:MCC:788018.788833}.

Bingham, Condon, and Hu suggested that the original formulation of sequential consistency, which is not prefix-closed, may be a reason why the verification problem is hard, and suggested two alternative variants to sequential consistency, Decisive Sequential Consistency (DSC) and Past-Time Sequential Consistency (PTSC) that are prefix-closed~\cite{Bingham:2003:TDN:777412.777467}.

Plakal, Sorin, Condon, and Hill use logical (Lamport) clocks as a tool to reason about correctness of their distributed shared memory protocol~\cite{Plakal:1998:LCV:277651.277672}.

Linearizability was described by Herlihy and Wing~\cite{Herlihy:1990:LCC:78969.78972}.
Linearizability has the pleasant property that it is a compositional consistency condition.

The cost of sequential consistency vs. linearizability was analyzed by Attiya and Welch~\cite{Attiya:1994:SCV:176575.176576}.
They proved that the cost of sequential consistency is lower than the cost of linearizability under reasonable assumptions.

\subsection{Fault-Tolerant Shared Memory}

The ABD algorithm was described by Attiya, Bar-Noy and Dolev~\cite{Attiya:1995:SMR:200836.200869}.
ABD was the first algorithm that showed it to be possible to implement fault-tolerant linearizable shared memory in a message passing system, but allowed only a single process to write to the memory.
Write operations complete after a single round of communication and read operations complete after two rounds.

The multi-writer ABD (MW-ABD) algorithm was described by Lynch and Shvartsman~\cite{Lynch:1997:RES:795670.796859}.
MW-ABD extended the ABD algorithm by allowing multiple processes to write to the memory, and in order to do so added a second round of communication to write operations.

\section{Conclusion}

We presented the SC-ABD algorithm that implements fault-tolerant, sequentially consistent, distributed shared memory, and proved it to be correct using a compositional proof structure.

Table~\ref{comparison} contains a comparison between SC-ABD, ABD, and MW-ABD along the criteria mentioned in the introduction:
consistency condition (linearizability (LIN) or sequential consistency (SC)); multiple writers allowed; number of rounds of communication required to complete a write (W)/read (R) operation; and how many faulty processes, $f$, that the algorithm tolerates.

\begin{table}[]
	\centering
	\caption{Comparison between three fault-tolerant DSM algorithms.}
	\label{comparison}
	\begin{tabular}{l|l|l|l}
		& ABD      & MW-ABD   & SC-ABD   \\ \hline
		Consistency      & LIN      & LIN      & SC       \\ \hline
		Multiple writers & No       & Yes      & Yes      \\ \hline
		Latency          & W:1, R:2 & W:2, R:2 & W:1, R:2 \\ \hline
		Resilience       & $f<n/2$  & $f<n/2$  & $f<n/2$  \\ \hline
	\end{tabular}
\end{table}

In a situation where an application, running on top of distributed shared memory, would satisfy its correctness conditions if the distributed shared memory provides sequential consistency, and the application would benefit from having a lower latency for write operations, we think that SC-ABD is a good choice.

Finally, we showed that, although sequential consistency is not a compositional consistency condition, it was still possible to reason compositionally about the correctness of the algorithm.

\section*{Acknowledgements}

We would like to thank the Swedish Foundation for Strategic Research for funding this work, and Jingna Zeng for helpful discussions.

\bibliographystyle{abbrv}
\bibliography{article_refs}

\end{document}